\def\@settitle{\begin{center}%
		\baselineskip14\p@\relax
		\normalfont\LARGE\scshape\bfseries
		%\uppercasenonmath\@title
		\@title
	\end{center}%
}
\def\subsection{\@startsection{subsection}{2}%
	\z@{.5\linespacing\@plus.7\linespacing}{.5\linespacing}%
	{\normalfont\bfseries}}
\def\subsubsection{\@startsection{subsubsection}{3}%
	\z@{.5\linespacing\@plus.7\linespacing}{.5\linespacing}%
	{\normalfont\itshape}}
\definecolor{darkblue}{rgb}{0.0, 0.0, 0.45}
\date{\today}
		\theoremstyle{plain}
		\newtheorem{theorem}{Theorem}
		\theoremstyle{plain}
		\newtheorem{assumption}{Assumption}
	\theoremstyle{plain}
		\theoremstyle{plain}
		\newtheorem{definition}{Definition}
		\theoremstyle{plain}
		\theoremstyle{plain}
		\newtheorem{remark}{Remark}
		\theoremstyle{plain}
\title[]{Maintenance scheduling of manufacturing systems based on optimal price of the network}
\author[1]{Pegah Rokhforoz$^1$}
\author[2]{Olga Fink$^{2}$}
\address[1]{Chair of Intelligent Maintenance Systems, ETH Zurich, Switzerland, and School of Electrical and Computer Engineering, University of Tehran, Tehran, Iran.}
\address[2]{Chair of Intelligent Maintenance Systems, ETH Zurich, Switzerland. Corresponding author, email address:ofink@ethz.ch.}
\begin{document}

% make the title area
\maketitle

% As a general rule, do not put math, special symbols or citations
% in the abstract or keywords.
\begin{abstract}
Goods can exhibit positive externalities impacting decisions of customers in socials networks. Suppliers can integrate these externalities in their pricing strategies to increase their revenue. Besides optimizing the  prize, suppliers also have to consider their production and maintenance costs. Predictive maintenance has the potential to reduce the maintenance costs and improve the system availability. To address the joint optimization of pricing with network externalities and predictive maintenance scheduling based on the condition of the system, we propose a bi-level optimization solution based on game theory. In the first level, the manufacturing company decides about the predictive maintenance scheduling of the units and the price of the goods. 
%In order to obtain the maintenance scheduling of units, we formulate an optimization model where the degradation of units is time-dependent and the degradation limit is a random variable with a known distribution. 
In the second level, the customers decide about their consumption using an optimization approach in which the objective function depends on their consumption, the consumption levels of other customers who are connected through the graph, and the price of the network which is determined by the supplier. To solve the problem, we propose the leader-multiple-followers game where the supplier as a leader predicts the strategies of the followers. Then, customers as the followers obtain their strategies based on the leader's and other followers' strategies. We demonstrate the effectiveness of our proposed method on a simulated case study. The results demonstrate that knowledge of the social network graph results in an increased revenue compared to the case when the underlying social network graph is not known. Moreover, the results demonstrate that obtaining the predictive maintenance scheduling based on the proposed optimization approach leads to an increased profit compared to the baseline decision-making (perform maintenance at the degradation limit). 
\end{abstract}

\section{introduction}
Social networks can have a significant influence on the economic system \cite{jadbabaie2019optimal,candogan2012optimal}. In a social network, users connect and communicate with each other. Hence, the decision of a user has the potential to affect the decisions of other users, particularly for goods that exhibit externalities. In such cases, users receive different extents of externalities from the consumption of other users to whom they are connected in their social network. The information about the consumption behavior of users spreads in the network. If suppliers know the graph network and the consumption behaviour of the users, they can try to integrate this information in their optimization. Aiming to maximize their revenue, suppliers determine the price of the good for each user such that their revenue is maximized.

Besides considering the prizing of the manufactured goods, manufacturers also need to consider the production and the maintenance costs. For example, deterioration of manufacturing units can significantly decrease the production quality \cite{lu2019quality}. The quality of the production depends, among others, on the maintenance strategy of the manufacturing equipment to prevent production disruptions \cite{Sarkar_2011}, \cite{erozan2019fuzzy}. In fact, the goal of maintenance scheduling is to maximize the availability of the system, while fulfilling the demand of the users or customers. Traditionally, preventive maintenance is scheduled regularly to restore the units to their healthy states. However, preventive maintenance does not consider the system's condition and the dynamic changes \cite{Omshi_2020}. Predictive maintenance has the potential to significantly reduce the maintenance costs. Furthermore, maintenance scheduling does typically not consider the consumption behaviour of its customers which influences the prizing but also the produced amount which in turn influences the requirements on the availability of the system but also the degradation of the system and the demand for maintenance actions. Hence, in order to maximize the availability and reward function, it is advisable for a supplier  to schedule the maintenance of its manufacturing units using the underlying graph of the social network of the users  and the users' consumption function.

In this paper, we address the maintenance scheduling problem of a manufacturing systems of goods with positive externalities. In this context, the goal of the manufacturer is to find the optimal price of the goods and the optimal maintenance scheduling for the manufacturing units such that its revenue and availability are maximized. In order to obtain the maintenance scheduling, we consider the framework of predictive maintenance in which the remaining useful lifetime (RUL) of the units and the associated uncertainties can be predicted. We define the optimization model for the manufacturer which comprises two main elements: 1) the revenue that can be obtained by selling the produced goods and 2) maintenance cost of the manufacturing units. To model the positive externality effects, we assume that users or customers are connected through a graph and that they choose their consumption level of the good based on their utility function and the price of the good. Hence, we face a bi-level optimization problem with the two levels representing the  supplier and the users. In order to solve the optimization problem, we propose to formulate the problem as a leader-follower game between the supplier and the users. In this game, the supplier is the leader and can predict the strategies (consumption function) of the followers. Then, based on this prediction, it determines the maintenance scheduling and optimal price of the network and sets the price for each of the users. We investigate the efficiency of the proposed approach based on a case study. 

To the best of our knowledge, this is the first research that obtains the maintenance scheduling of different units in the manufacturing system by considering the price of the network and the positive externality effects of the goods using a leader-follower game among the manufacturer and the customers.

The rest of the paper is organized as follows. The review of related work is presented in Section \ref{sec:relate}. The preliminaries on predicting the remaining useful lifetime (RUL) and the leader-multiple-followers game are introduced in Section \ref{sec:preliminary}. The supplier's and customers' objective functions are formulated in Section \ref{sec:problem}. The solution method based of the leader-follower game is proposed in Section \ref{sec:solution}. The simulation results are presented in Section \ref{sec:simulation}. The conclusion remarks are made in Section \ref{sec:conclusion}.

\section{Related work}
\label{sec:relate}
\textbf{Network price of goods with positive externalities.}
Several approaches have been proposed to determine the optimal price in networks with positive externalities. The authors of \cite{candogan2012optimal} study the optimal pricing strategies of a supplier of goods with positive externalities assuming that consumers are connected by a social network graph. Their usage levels depend directly on the usage of their neighbors within a social network \cite{candogan2012optimal}. Since the users' consumption behaviour affects that of other users, it has been modeled based on the game theory concept \cite{wu2002optimal}, \cite{bramoulle2014strategic}.
The optimal contracting between the seller and buyers is proposed in \cite{wu2002optimal}. The authors develop the solution method using the Stackelberg game where the seller acts as a leader and the buyers are followers. The strategic interaction and networks have been addressed using game theory in \cite{bramoulle2014strategic}. In this study, the decisions of some suppliers affect those of other suppliers in the network. 

In all of these works, the manufacturers solely consider the pricing of the goods and do not take the impact of the production on the operation and maintenance costs into account. In fact, sudden failures causing an interruption of the production can severally impact the ability of the supplier to respond to the demand. Furthermore, not considering the interruptions caused by planned maintenance in the pricing will result in sub-optimal prices and revenues. To maximize their revenue, the suppliers should strive to perform maintenance on their manufacturing units when the price of the network is low and they sell less.

\textbf{Maintenance scheduling.} 
Maintenance scheduling with the aim of increasing the system reliability and availability has been addressed in many applications such as electrical market \cite{rokhforoz2021multi}, \cite{volkanovski2008genetic}, network transportation \cite{Amiri_2006designing}, \cite{feng2018two}, and manufacturing system \cite{yang2008maintenance}, \cite{celen2020integrated}.

Several studies have proposed different approaches for maintenance scheduling and optimal management of manufacturing systems. The authors of \cite{yang2008maintenance} propose a method based on a Genetic algorithm to obtain the maintenance scheduling of manufacturing equipment using the predicted level of degradation. The maintenance scheduling for manufacturing systems proposed in \cite{lu2017opportunistic} considers the product quality and the maintenance cost including the repair and preventive maintenance cost. Dynamic maintenance scheduling which obtains the optimal predictive maintenance using dynamic programming method is developed in \cite{Ghasemi_2007}, \cite{Ghasemi_2008}, and  \cite{kroning2013dynamic}. In these studies, the deterioration process of the units is formulated as a Markov decision process. An optimization-based approach for dynamic maintenance scheduling of different units which considers the operating conditions is proposed in \cite{celen2020integrated}. In all of these studies, the maintenance scheduling is obtained only based on the degradation level of different units without considering the price of the goods and the effect of maintenance on the revenue of the manufacturer.

\paragraph*{\textbf{Notation}:} Given $D\in{\mathbb{R}^{n\times{n}}}$, $D^{-1}$ denotes the inverse of $D$.  We define the column augmentation of $Z_{n}(t)$ for $t=1,\cdots,T$, as ${Z}_{n}=\textbf{col}(Z_{n}(1),\cdots,Z_{n}(T)):=[Z_{n}(1),\cdots,Z_{n}(T)]$.

\section{Preliminaries}
\label{sec:preliminary}
\subsection{Remaining useful life (RUL) and predictive maintenance}
The remaining useful life (RUL) is defined as the amount of time that an asset will continue to satisfy its desired operating conditions \cite{lee2014prognostics}. Predicting RUL is one of the core tasks in predictive maintenance applications. Predicting the RUL enables on the one hand to perform maintenance before the failure occurs and by that also to improve the availability of the system while on the other hand, the lifetime of the system can be fully exploited resulting in less frequent replacements or repairs. Several methods have been proposed in the literature to estimate RUL. These methods can be categorized in three categories: 1) model-based approaches, where the physics principle models of the degradation of the asset are applied \cite{liao2021remaining}; 2) data-driven methods, in which the RUL is estimated based on the condition monitoring data only \cite{si2012remaining}; 3) knowledge-based approaches that depend on the domain knowledge of an expert \cite{djeziri2018hybrid}. RUL estimations are always subject to uncertainties. While not all of the RUL prediction methods also estimate the uncertainty of the predictions, uncertainty quantification is an integral part of decision support systems and is particularly desirable by domain experts using such systems. For many systems with multiple-units that are jointly fulfilling the demand, performing predictive maintenance at the end of life of the system may not the optimal. Since, the maintenance decision of each unit also depends on the decision of other units and the production requirements. Hence,  predictive maintenance scheduling is pivotal for fulfilling the demand while minimizing the maintenance costs.
In our problem formulation, we assume that we are able to predict the RUL and can quantify the associated uncertainty. We assume that the distribution function of RUL is given. The manufacturer should, therefore,  take into account the uncertainty of RUL and schedule the maintenance based on the production requirements and the associated costs of the unavailability, while considering that a part of the lifetime of the unit may be lost when replacing the unit before the end of life. 

\subsection{Leader-multiple-followers game}
Let us consider N-players, $i=1,\cdots,N$, as followers and one player as a leader. Let $x_{i}$ denote the follower $i$'s strategy and $y$ the leader's strategy, respectively. Let $U^{f}_{i}(x_{i},x_{-i},y)$ denote the utility function of the follower $i$ where $x_{-i}$ is the strategy of all followers except follower $i$. Let $U^{l}(y,x)$, in which $x=\textbf{col}(x_{1},\cdots,x_{n})$, denote the utility function of the leader. The equilibrium point of the leader-multiple-followers game can be defined as follows:
\begin{definition}
\label{def1}
Let $(x^*,y^*)$ be the equilibrium point among the leader and the followers, then,
\begin{equation}
\begin{aligned}
    &U^f_{i}(x^*_{i},x^*_{-i},y^*)\geq{ U^f_{i}(x_{i},x^*_{-i},y^*)},\quad i=1,\cdots,N,\\
    &U^{l}(y^*,x^*)\geq{U^{l}(y,x^*)}
    \end{aligned}
\end{equation}
\end{definition}
In the leader-follower game, the leader predicts the followers' strategies and implements his strategy first, while the followers react to the leader's decisions.

\section{Problem formulation}
\label{sec:problem}
In this section, we propose the optimization model for obtaining the maintenance scheduling of the units in the manufacturing system based on the optimal price of the network. 

 Let us define $\mathcal{J}={\{1,\cdots,J}\}$ as the set of units in the manufacturing system that are manufacturing the good that has positive externalities to the customers. Let us define $\mathcal{N}={1,\cdots,N}$ as the set of customers who are connected through the graph matrix $W$ in the social network. The $il$ element of matrix $W$ denoted by $w_{il}\geq{0}$ represents the strength of the connection between the two customers in the network and represents concurrently also the influence of customer $l$ on customer $i$. The manufacturer aims to obtain the optimal price of the network while maintaining a high  availability of the manufacturing units. We assume that the manufacturer has implemented a predictive maintenance strategy to improve the availability of the system and is able to predict the RUL of the systems and the associated uncertainty. Predicting the RUL enables manufacturer to schedule the maintenance before the end of life ant thereby, prevent unscheduled down times while fully exploiting the lifetimes of the units. The manufacturers aims on the one hand to maintain a high availability of the units and exploit the useful lifetime of the units as much as possible, while performing the maintenance at points in time when it has the least possible impact on its revenue. The manufacturer is, therefore, seeking to obtain the maintenance scheduling of manufacturing units for the decision horizon time $\mathcal{T}={\{1,\cdots,T}\}$ while maximizing the revenue that can be obtained by selling the products to the customers. In the following, we explain the proposed optimization model.

\subsection{Manufacturer's model: maintaining and pricing}
In the following, we propose the manufacturer's optimization problem which concurrently optimizes the price of the network and the maintenance scheduling of the units.

\textit{Maintaining:} In this study, we assume that the deterioration state of the unit increases with time. The maintenance restores the state of the unit to its initial condition (as good as new). Hence, we model the deterioration and the restoration as follows:
\begin{equation}
s_{j}(t+1)=(1-x_{j}(t))s_{j}(t)+1,
    \label{eq:rul}
\end{equation}
where $s_{j}(t)$ and $s_{j}(t-1)$ denote the deterioration state of unit $j$ at instants $t$ and $t-1$, respectively. $x_{j}(t)$ is a binary variable and $x_{j}(t)=1$ denotes that unit $j$ performs maintenance at instant $t$. 

When the deterioration state reaches the degradation threshold, it reaches the end of life which is equivalent to the remaining useful life (RUL) reaching zero. If the unit has not been maintained before the end of life, it will fail. As discussed above, the prediction of the RUL is always associated with uncertainties. It can, therefore, be considered as a random variable with a known distribution function. We assume that the predicted RUL and the associated uncertainty are given. To avoid failure, the units should perform maintenance before the end of life. To model this aim, we apply the chance constraint problem as follows:
\begin{equation}
{\mathbb{P}}{\Big(S_{2,j}\in\mathcal{A}|\quad s_{j}(t)-{S_{2,j}}\leq{0}\Big)}\geq{1-\alpha},
\label{eq:chance constraint}
\end{equation}
where $\mathbb{P}$ is a probability measure defined over $\mathcal{A}$. $S_{2,j}$ is the degradation threshold of the deterioration state of unit $j$. $s_{j}(t)$ which satisfies the chance constraint \eqref{eq:chance constraint} can be considered as the $\alpha$-level feasible solution. 

Another constraint of this problem is that the total consumption of all customers should be less than the total production of all units. We assume that the demand cannot be postponed and is satisfied at all times. Furthermore, we assume that the good is perishable and cannot be stored to cover the demand during the maintenance down time.  Moreover, we assume that when the unit performs maintenance, its production is zero. Hence, we can model this constraint as follows:

\begin{equation}
    \sum\limits_{i\in\mathcal{N}}q_{i}(t)\leq{\sum\limits_{j\in\mathcal{J}}(1-x_{j}(t))q_{j,max}},\\
    \label{eq:capacity}
\end{equation}
where $q_{i}(t)$ is the amount of the consumption of customer $i$ at instant $t$. $q_{j,\max}$ is the maximum production output by manufacturing unit $j$.

The supplier seeks to obtain the maintenance scheduling which satisfies \eqref{eq:chance constraint} and \eqref{eq:capacity}, while concurrently minimizing the maintenance cost of the units which is modeled as follows:

\begin{equation}
  C=\sum\limits_{j\in\mathcal{J}}\sum\limits_{t\in\mathcal{T}}c_{j}x_{j}(t),\\
  \label{eq:cost}
    \end{equation}
 where $c_{j}$ is the cost of unit $j$.   
 
\textit{Pricing:} The supplier aims to obtain the price of the network which maximizes the following reward:

\begin{equation}
    R=\sum\limits_{i\in\mathcal{N}}\sum\limits_{t\in\mathcal{T}}\phi_{i}(t){q_{i}(t)},\\
    \label{eq:reward}
\end{equation}
where $\phi_{i}(t)\geq{0}$ is the price of the consumed good of customer $i$ at instant $t$. $q_{i}(t)\geq{0}$ is the consumption of customer $i$ at instant $t$.

\textit{Supplier's objective function:} Using \eqref{eq:cost} and \eqref{eq:reward}, the utility function of the supplier can be expressed as follows:
\begin{equation}
    U=\sum\limits_{i\in\mathcal{N}}\sum\limits_{t\in\mathcal{T}}\phi_{i}(t){q_{i}(t)}-\sum\limits_{j\in\mathcal{J}}\sum\limits_{t\in\mathcal{T}}c_{j}x_{j}(t),
    \label{eq:utility}
\end{equation}

Hence, by applying \eqref{eq:utility}, and by considering constraints \eqref{eq:rul}, \eqref{eq:chance constraint}, and \eqref{eq:capacity}, we can formulate the supplier's objective function as follows:

\begin{equation}
\begin{aligned}
    &\max\limits_{{\phi},{x},{s}}\sum\limits_{i\in\mathcal{N}}\sum\limits_{t\in\mathcal{T}}\phi_{i}(t){q_{i}(t)}-\sum\limits_{j\in\mathcal{J}}\sum\limits_{t\in\mathcal{T}}c_{j}x_{j}(t)\\
    &\text{S.b.}\quad \text{C}_{1}:\quad s_{j}(t+1)=(1-x_{j}(t))s_{j}(t)+1,\\
    &\quad \qquad \text{C}_{2}:\quad{\mathbb{P}}{\Big(S_{2,j}\in\mathcal{A}|\quad s_{j}(t)-{S_{2,j}}\leq{0}\Big)}\geq{1-\alpha},\\
    &\quad \qquad \text{C}_{3}:\quad\sum\limits_{i\in\mathcal{N}}q_{i}(t)\leq{\sum\limits_{j\in\mathcal{J}}(1-x_{j}(t))q_{j,max}},\\
    &\quad \qquad \text{C}_{4}:\quad q_{i}(t)\geq{0},\quad \phi_{i}(t)\geq{0},
    \end{aligned}
    \label{eq:main problem}
\end{equation}
where ${\phi}=\textbf{Col}{\{\phi_{1},\cdots,\phi_{N}}\}$, ${\phi}_{i}=\textbf{Col}{\{\phi_{i}(1),\cdots,\phi_{i}(T)}\}$, ${x}=\textbf{Col}{\{x_{1},\cdots,x_{N}}\}$, ${x}_{i}=\textbf{Col}{\{x_{i}(1),\cdots,x_{i}(T)}\}$,
${s}=\textbf{Col}{\{s_{1},\cdots,s_{N}}\}$, ${s}_{i}=\textbf{Col}{\{s_{i}(1),\cdots,s_{i}(T)}\}$, $i\in\mathcal{N}$.

The problem \eqref{eq:main problem} is a nonlinear mixed-integer programming problem due to the constraints $\text{C}_{1}$ and $\text{C}_{2}$. We convert \eqref{eq:main problem} to the mixed-integer linear programming by using the big-M method \cite{fortuny1981representation} and scenario-based approach \cite{margellos2014road}. In the following, we explain these two approaches.

First, let us introduce a new variable $y_{j}(t)=x_{j}(t)s_{j}(t)$. By using the big-$M$ method, we can convert constraint $\text{C}_{1}$ to the following mixed-integer linear constraints:

\begin{equation}
    \begin{aligned}
    &s_{j}(t+1)=s_{j}(t)-y_{j}(t)+1,\\
    &y_{j}(t)\geq{s_{j}(t)-(1-x_{j}(t))M},\\
    &y_{j}(t)\leq{s_{j}(t)+(1-x_{j}(t))M},\\
     &{0}\leq{y_{j}(t)}\leq{x_{j}(t)}M.
    \label{eq:big m}
\end{aligned}
\end{equation}

Now, we can substitute the chance constraint \eqref{eq:chance constraint} with the finite number of constraints using the scenario based approach. Each constraint corresponds to a different realization of $S_{2,j}^{k}$, $k=1,\cdots,K$, for the uncertain parameter $S_{2,j}$. Thus, we have:

\begin{equation}
    s_{j}(t)-S_{2,j}^{k}\leq{0},\quad k=1,\cdots,K.
    \label{eq:scenario}
\end{equation}

\begin{remark}
The number of the scenarios $K$ should be chosen sufficiently large such that the feasible solution of \eqref{eq:scenario} is an $\alpha$-level feasible solution of \eqref{eq:chance constraint}.
\end{remark}

Hence, by using \eqref{eq:big m} and \eqref{eq:scenario}, the optimization problem \eqref{eq:main problem} leads to the following mixed-integer linear programming problem:

\begin{equation}
\begin{aligned}
    &\max\limits_{{\phi},{x},{s},{y}}\sum\limits_{i\in\mathcal{N}}\sum\limits_{h\in\mathcal{H}}\phi_{i}(t){q_{i}(t)}-\sum\limits_{j\in\mathcal{J}}\sum\limits_{t\in\mathcal{T}}c_{j}x_{j}(t)\\
    &\text{S.b.}\quad \text{C}'_{1}:\quad   s_{j}(t+1)=s_{j}(t)-y_{j}(t)+1,\\
    &\quad \qquad \text{C}'_{2}:\quad y_{j}(t)\geq{s_{j}(t)-(1-x_{j}(t))M},\\
    &\quad \qquad \text{C}'_{3}:\quad y_{j}(t)\leq{s_{j}(t)+(1-x_{j}(t))M},\\
     &\quad \qquad \text{C}'_{4}:{0}\leq{y_{j}(t)}\leq{x_{j}(t)}M.\\
    &\quad \qquad \text{C}'_{5}:\quad    s_{j}(t)-S_{2,j}^{k}\leq{0},\quad k=1,\cdots,K.
\\
    &\quad \qquad \text{C}'_{6}:\quad\sum\limits_{i\in\mathcal{N}}q_{i}(t)\leq{\sum\limits_{j\in\mathcal{J}}(1-x_{j}(t))q_{j,max}},\\
    &\quad \qquad \text{C}'_{7}:\quad q_{i}(t)\geq{0},\quad \phi_{i}(t)\geq{0},
    \end{aligned}
    \label{eq:main problem1}
\end{equation}
where ${y}=\textbf{Col}{\{y_{1},\cdots,y_{N}}\}$, ${y}_{i}=\textbf{Col}{\{y_{i}(1),\cdots,y_{i}(T)}\}$.

The consumption $q_{i}(t)$ is obtained by the customers' objective function, which is explained in the following.

\subsection{Customers' model: consumption}

The customers seek to obtain the optimal consumption which maximizes their reward function. Let us model the customer $i$ objective function as follows:

\begin{equation}
    \max\limits_{{q}_{i}}\sum\limits_{t\in\mathcal{T}}\big(v_{i}(q_{i}(t))+\sum\limits_{l\in\mathcal{N}}w_{il}q_{l}(t)q_{i}(t)-\phi_{i}(t){q}_{i}(t)\big),
    \label{eq:customer}
\end{equation}
where ${q}_{i}=\textbf{Col}{\{q_{i}(1),\cdots,q_{i}(T)}\}$. The objective function consists of three parts. The first term denotes the reward that customer $i$ can obtain by consuming the good. Inspired by the studies in \cite{candogan2012optimal} and \cite{jadbabaie2019optimal}, we model the first reward term as follows:
\begin{equation}
    v_{i}(q_{i}(t))=-\frac{1}{2}a_{i}q_{i}^{2}(t)+b_{i}q_{i}(t),
\end{equation}
where $a_{i}\geq{0}$ and $b_{i}$ are the customer's $i$ model. The second term expresses the effect of other customers which are connected through the social network graph to customer $i$. This term implies that the consumption $q_{k}(t)$ of customer $k$, who is connected to customer $i$ through the social network graph with the weight $w_{ik}$, influences the objective function of customer $i$. The third term formulates the price that customer $i$ is willing to pay to the supplier for obtaining the consumed good $q_{i}(t)$.

\section{Proposed solution: leader and multiple-followers game}
\label{sec:solution}
In this section, we propose the solution method using the concept of leader-multiple-followers game \cite{hu2013existence}, \cite{cruz1978leader} for solving the bi-level optimization problem (\eqref{eq:main problem1} and \eqref{eq:customer}). In this problem, we consider the customers as the followers who are seeking to obtain their consumption and the supplier as the leader who is responsible to obtain the price of the network and maintenance scheduling of its manufacturing units. The schematic of the proposed framework is shown in Figure \ref{schematic}.

\begin{figure}
	\centering
	\includegraphics[width=5in]{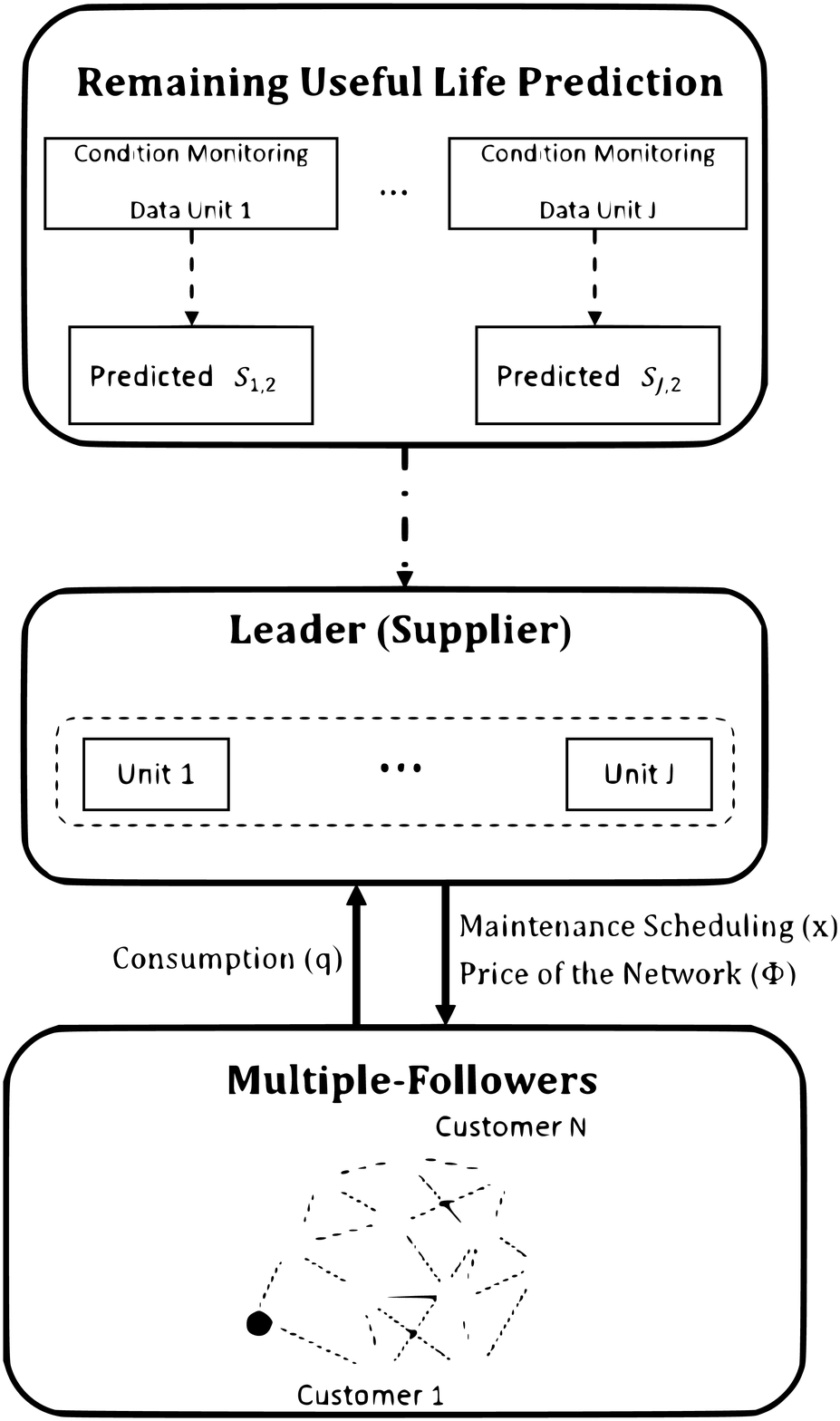}
	\caption{Schematic of the leader-multiple-followers game for obtaining the maintenance scheduling based on the price of the network.}
	\label{schematic}
\end{figure}

\subsection{Consumption equilibrium of multiple-follower}
In the proposed solution, we assume that the customers are the followers of the supplier. According to \eqref{eq:customer}, since the customers' objective function depends on the strategies of other customers, we face a game theory problem among the customers which can be described as $G=(\mathcal{N},\mathcal{Q},\mathcal{U})$:

1) $\mathcal{N}$ denotes the set of customers  as the players. 

2) $\mathcal{Q}=\prod\limits_{i\in\mathcal{N}}q_{i}$ denotes the strategy space of the players. 

3) $\mathcal{U}={\{U_{1},\cdots,U_{N}}\}$ is the set of utilities, where the utility of player $i$ is defined as follows:

\begin{equation}
    U_{i}(q_{i},q_{-i},\phi_{i})=    \sum\limits_{t\in\mathcal{T}}\big(v_{i}(q_{i}(t))+\sum\limits_{l\in\mathcal{N}}w_{il}q_{l}(t)q_{i}(t)-\phi_{i}(t){q}_{i}(t)\big)
\end{equation}
where $q_{-i}={\{q_{1},\cdots,q_{i-1},q_{i+1},\cdots,q_{N}}\}$ are the strategies of all players except player $i$. 

The Nash equilibrium (NE) is one appropriate output solution of the game. Using  Definition \eqref{def1}, the NE among customers can be defined as follows:
\begin{equation}
    U_{i}(q^*_{i},q^*_{-i},\phi^*_{i})\geq{U_{i}(q_{i},q^*_{-i},\phi^*_{i})}
\end{equation}

In order to find the NE of the game among customers, let us define the following assumption:

\begin{assumption}
\label{assumption 1}
$a_{i}\geq{\sum\limits_{l\in\mathcal{N}}w_{il}}$, $i\in\mathcal{N}$.
\end{assumption}

\begin{theorem}
Under Assumption \eqref{assumption 1}, the game $G=(\mathcal{N},\mathcal{Q},\mathcal{U})$ has a unique NE which is defined as follows:
\begin{equation}
    q^*(t)=(A-W)^{-1}(B-\Phi^*(t)),
    \label{eq:NE}
\end{equation}
where $A=\mathop{\mathrm{diag}}(a_{1},\cdots,a_{N})$, $W=\begin{pmatrix}w_{11}&\cdots&w_{1N}\\\vdots&\cdots&\vdots\\w_{N1}&\cdots&w_{NN}\end{pmatrix}$, $B=\begin{pmatrix}b_{1}\\\vdots\\b_{N}\end{pmatrix}$, $\Phi^*=\begin{pmatrix}\phi^*_{1}\\\vdots\\\phi^*_{N}\end{pmatrix}$
\end{theorem}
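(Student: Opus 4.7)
The plan is to exploit the fact that $U_i(q_i,q_{-i},\phi_i)$ decomposes as a sum over $t\in\mathcal{T}$, with each summand depending only on the time-$t$ slice of the strategies. Hence the $T$-period game decouples into $T$ independent static games, one per instant, and it suffices to prove existence and uniqueness of a Nash equilibrium at a fixed time $t$; the overall NE is then obtained by stacking the per-instant equilibria.

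For a fixed $t$ and fixed $q_{-i}(t)$, the objective of player $i$ is quadratic in $q_i(t)$ with leading coefficient $-\tfrac{1}{2}a_i\le 0$, so (for $a_i>0$) it is strictly concave. The best response is therefore unique, and any NE is characterized by the simultaneous stationarity conditions $\partial U_i/\partial q_i(t)=0$, which give, assuming $w_{ii}=0$ (no self-influence, as is standard in network games),
\[
-a_iq_i(t)+b_i+\sum_{l\in\mathcal{N}}w_{il}q_l(t)-\phi_i(t)=0,\qquad i\in\mathcal{N}.
\]
Stacking these $N$ equations in matrix form yields exactly $(A-W)\,q^*(t)=B-\Phi^*(t)$, so the only remaining task is to show that $A-W$ is nonsingular.

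This non-singularity is the main obstacle, and it is precisely where Assumption~\ref{assumption 1} is used. Since $w_{il}\ge 0$, the inequality $a_i\ge\sum_l w_{il}$ states that $A-W$ is diagonally dominant by rows with strictly positive diagonal entries; under the (implicit) strict or irreducible strengthening, the Levy--Desplanques theorem (equivalently, the Gershgorin disc theorem applied to $(A-W)^{\top}$) delivers non-singularity. A more constructive route, which also proves uniqueness directly, is to introduce the joint best-response map
\[
T\bigl(q(t)\bigr)=A^{-1}\bigl(B-\Phi^*(t)+Wq(t)\bigr)
\]
and observe that
\[
\|T(q)-T(q')\|_{\infty}\le\max_{i}\frac{\sum_{l}w_{il}}{a_i}\,\|q-q'\|_{\infty}.
\]
Assumption~\ref{assumption 1} makes this Lipschitz constant at most one, and strictly less than one under the natural strict inequality, so $T$ is a contraction on $\mathbb{R}^{N}$ and Banach's fixed-point theorem furnishes the unique fixed point.

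Putting the pieces together, $(A-W)q^*(t)=B-\Phi^*(t)$ has the unique solution $q^*(t)=(A-W)^{-1}(B-\Phi^*(t))$ at each instant $t$, and by the time-decoupling observation this unique per-instant profile constitutes the unique NE of $G=(\mathcal{N},\mathcal{Q},\mathcal{U})$, which is the claim of the theorem.
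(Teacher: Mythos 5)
Your proof follows essentially the same route as the paper's: derive the per-player first-order condition, stack it as $Aq(t)=B-\Phi^*(t)+Wq(t)$, and invert $A-W$. The difference is that the paper simply asserts ``Under Assumption 1, matrix $A-W$ is invertible'' with no justification, whereas you actually supply the argument (diagonal dominance via Gershgorin/Levy--Desplanques, or equivalently a contraction bound on the joint best-response map) --- and in doing so you correctly expose that the assumption as written is not quite sufficient. With only the weak inequality $a_i\ge\sum_l w_{il}$, the matrix $A-W$ can be singular: take $N=2$, $a_1=a_2=1$, $w_{12}=w_{21}=1$, $w_{11}=w_{22}=0$, so that $A-W=\left(\begin{smallmatrix}1&-1\\-1&1\end{smallmatrix}\right)$. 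Your remark that one needs the strict inequality (or irreducibility) to close the argument is therefore a genuine correction to the paper, not a cosmetic one, and your contraction-mapping alternative has the added benefit of proving uniqueness constructively rather than only characterizing the equilibrium through the stationarity system. Your explicit note that the best-response formula presumes $w_{ii}=0$ is likewise a detail the paper glosses over (its own formula sums $w_{il}q_l$ over all $l$, including $l=i$, which is only consistent with the stated quadratic objective when the diagonal of $W$ vanishes, as it does in the numerical example). In short: same skeleton, but you have filled in the one step that actually carries the mathematical content.
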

\begin{proof}
The best response of player $i$ is as follows:
\begin{equation}
    q_{i}(t)=\beta_{i}(q_{-i}(t))=\frac{b_{i}-\phi^*_{i}}{a_{i}}+\frac{\sum\limits_{l\in\mathcal{N}}w_{il}q_{l}(t)}{a_{i}},\quad {i\in\mathcal{N}},
    \label{eq:best response}
\end{equation}
Hence, \eqref{eq:best response} can be written in the matrix form as follows:
\begin{equation}
    Aq({t})=B-\Phi^*(t)+Wq(t)
\end{equation}
Under Assumption \eqref{assumption 1}, matrix $A-W$ is invertible. Hence, the NE of the game can be obtained as follows:

\begin{equation}
    q^*(t)=(A-W)^{-1}(B-\Phi^*(t)).
\end{equation}

\end{proof}

\subsection{Price and maintenance scheduling of the leader}
As we mentioned above, in the proposed model, we assume that the supplier can predict the strategies of followers. Based on this prediction, the supplier optimizes its strategy. Moreover, the supplier acts first, in the sense that it sends its strategy to the followers, which then decide about their actions. We assume that the pricing is performed at the level of an individual customer. Customers who are more connected within the social network (and are, therefore, impacted by the network) can than be better targeted by the supplier. The strategy of the supplier comprises the price of the network and the maintenance scheduling of the manufacturing units. Hence, in the following, we obtain the optimal price and maintenance scheduling of the supplier relying on the  fact that it knows the NE of the game \eqref{eq:NE} among the customers. 

Let us define the matrix $(A-W)^{-1}$ as follows:
\begin{equation}
    (A-W)^{-1}=(R_{il})_{i,l\in\mathcal{N}}.
    \label{eq:matrix}
\end{equation}
In view of equality \eqref{eq:NE} and using \eqref{eq:matrix}, the optimization problem of supplier \eqref{eq:main problem1} can be defined as the following mixed-integer quadratic programming problem:
\begin{equation}
\begin{aligned}
    &\max\limits_{{\phi},{x},{s},{y}}\sum\limits_{t\in\mathcal{T}}\phi^\top(t)(A-W)^{-1}(B-\Phi(t))-\sum\limits_{j\in\mathcal{J}}\sum\limits_{t\in\mathcal{T}}c_{j}x_{j}(t)\\
    &\text{S.b.}\quad \text{C}'_{1},\quad \text{C}'_{2}\quad ,\quad \text{C}'_{3},\quad\text{C}'_{4},\quad \text{C}'_{5},\\
    &\qquad\quad\text{C}''_{6}:\quad\sum\limits_{l\in\mathcal{N}}R_{il}(b_{l}-\phi_{l})\leq{\sum\limits_{j\in\mathcal{J}}(1-x_{j}(t))q_{j,max}},\\
    &\quad \qquad \text{C}''_{7}:\quad (A-W)^{-1}(B-\Phi(t))\geq{0},\quad \phi(t)\geq{0}.
    \end{aligned}
    \label{eq:main problem2}
\end{equation}
By solving \eqref{eq:main problem2}, the optimal price $\phi^*_{i}$, $i\in\mathcal{N}$, can be obtained and sent to the customers. Based on the price, they can obtain their NE using \eqref{eq:NE}.
\section{Simulation results}
\label{sec:simulation}
In this section, we implement the proposed method on a manufacturing system where the supplier has five manufacturing units and aims to obtain their maintenance scheduling for $30$ days. We assume a normal distribution for the deterioration threshold value. The mean and the standard deviation of the deterioration state thresholds and the maintenance costs of each of the manufacturing units are displayed in Table \ref{tab:thereshold}.

\begin{table}
  \caption{Deterioration and maintenance parameters of the manufacturing units}
  \centering
 \begin{tabular}{c | c c c} 
Unit &  Mean of deterioration  & Standard deviation of deterioration & Maintenance cost  \\
&  state thresholds $(S_{2,j})$ &  state thresholds $(S_{2,j})$ &  $(c_{j})$ $[MU]$ \\
 \hline
 1 &  12& 1.4 &20.48 \\ 
 \hline
 2 & 10  & 3.2 &21.39 \\
 \hline
 3 & 11.2& 2.5 &22.73\\
 \hline
 4 &  9.4& 1.1& 24.78\\
 \hline
 5 & 11.8 &2.1 & 24.82\\
\end{tabular}
  \label{tab:thereshold}
\end{table}

Moreover, we assume $10$ customers who are connected through a social network graph. 
The profile of the customers' demand is depicted in Figure \ref{demand1}.
\begin{figure}
	\centering
	\includegraphics[width=5.5in]{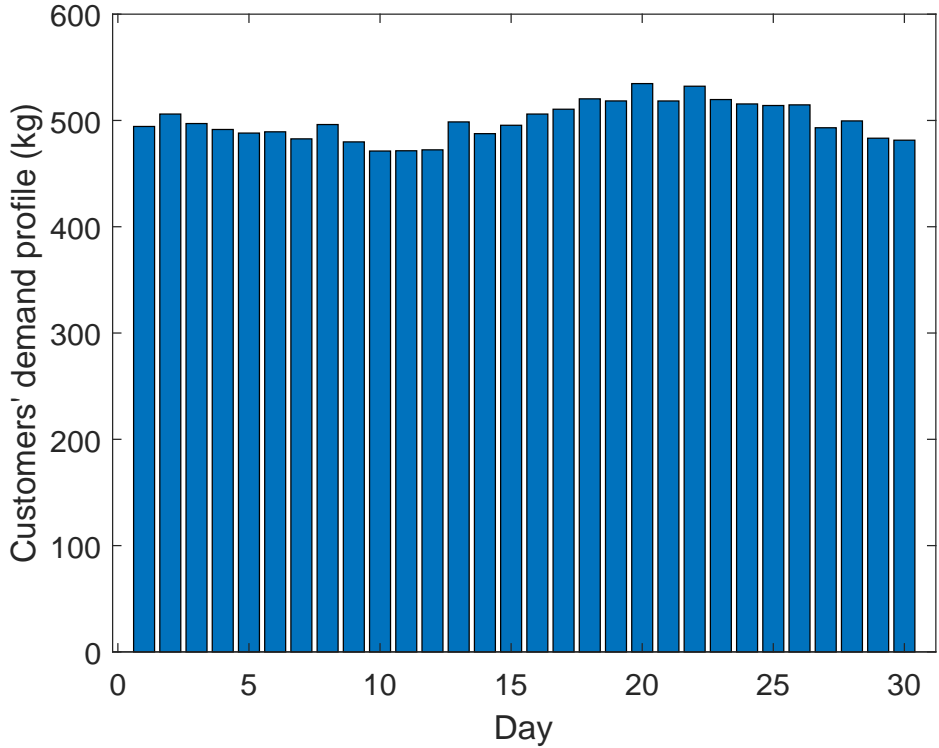}
	\caption{The total demand of customers during $30$ days.}
	\label{demand1}
\end{figure}
We consider the following graph weights among the customers in the network:
\begin{equation}
W=
\begin{pmatrix}
0&0.54&0.82&0.28&0.11&0.41&0.16&0.05&0.76&0.86\\
0.54&0&0.79&0&0&0&0&0&0&0\\0.82&0.79&0&0.36 &0 &0&0&0&0&0\\
0.28&0 &     0.36&0& 0 &0&0 &0&0\\                     0                            0.11&0&0&0.08&0&0.62&0&0&0&0\\0.41&0&0&0&                           0.62&0&0&0&0\\0.16&0&0&0&0&0.63&0&0.75&0&0\\0.05&0&0&0&0&0&0.75&0&0.54&0\\0.76&0&0&0&0&0&0&0.54&0&0.25\\0.86&0&0&0&0&0&0&0&0.25&0
\end{pmatrix}
\end{equation}
The weight matrix shows that customer $1$ is connected to all other customers.  Hence, his/her decisions have a high impact on the decisions of other customers. Figure \ref{consumption} shows the consumption level of the customers. As we can see, the consumption level of customer $1$ is higher than that of other customers since the supplier offers a lower price to this customer to encourage him/her to consume more goods which has a high impact on other customers' strategies. 
\begin{figure}
	\centering
	\includegraphics[width=5.5in]{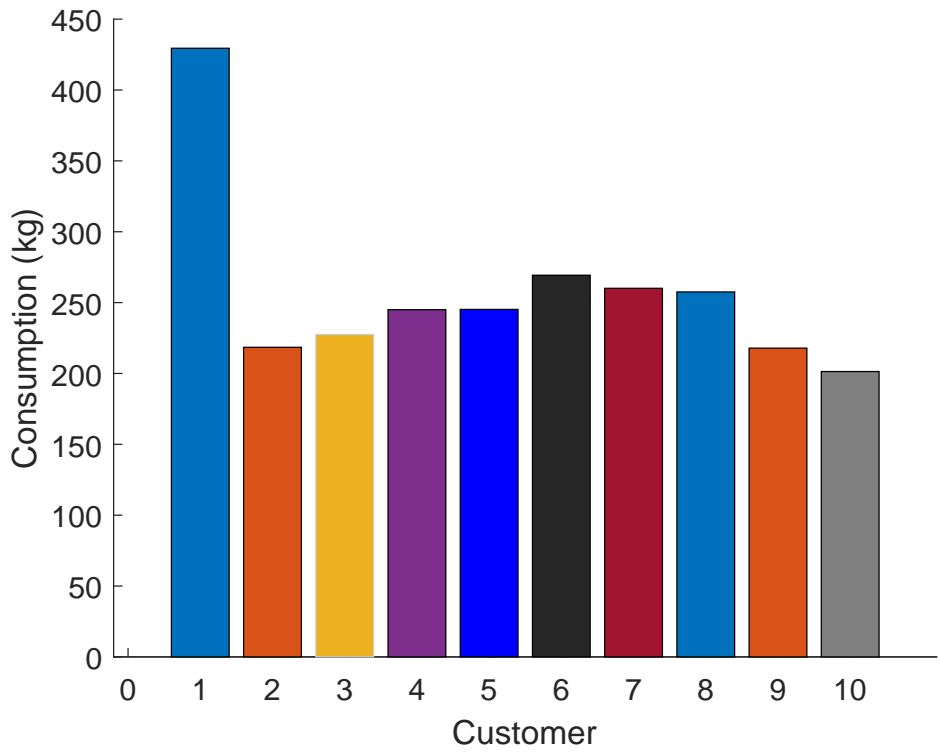}
	\caption{The total consumption of customers during $30$ days.}
	\label{consumption}
\end{figure}
The network price and maintenance scheduling of the units are shown in Figures \ref{price} and \ref{maintenance}.
\begin{figure}
	\centering
	\includegraphics[width=5.5in]{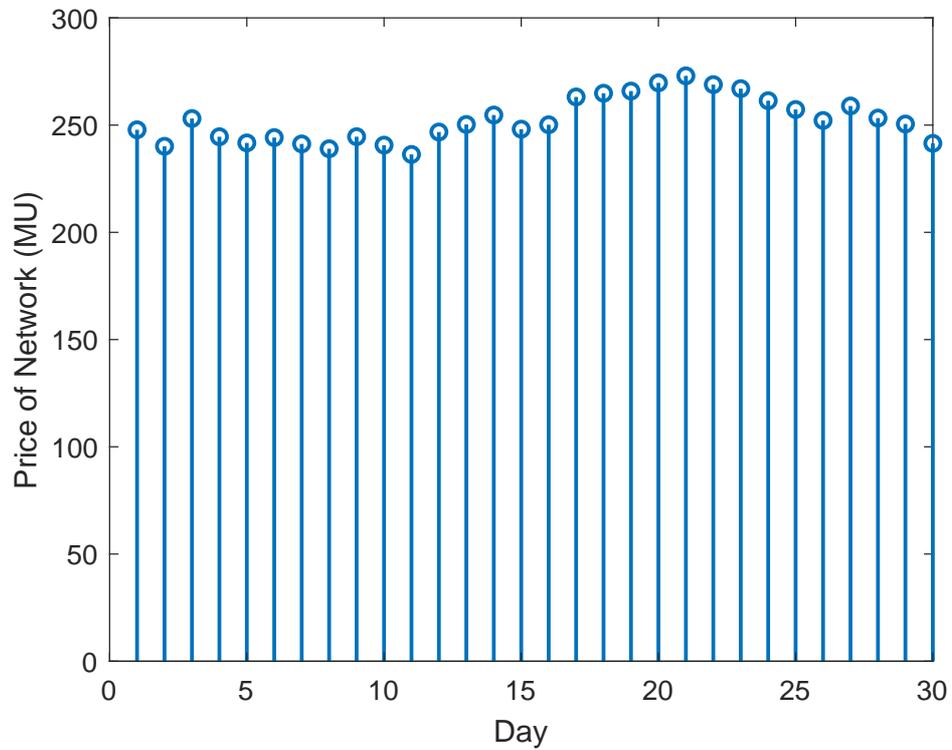}
	\caption{The profile of the network price during $30$ days.}
	\label{price}
\end{figure}
As we can see from Figure \ref{price} and \ref{maintenance}, the manufacturing units perform maintenance more frequently when the price of the network is low (in the first $15$ days)  compared to the time periods when the network price is high (between day $15$ and $25$). 
\begin{figure}
	\centering
	\includegraphics[width=5.5in]{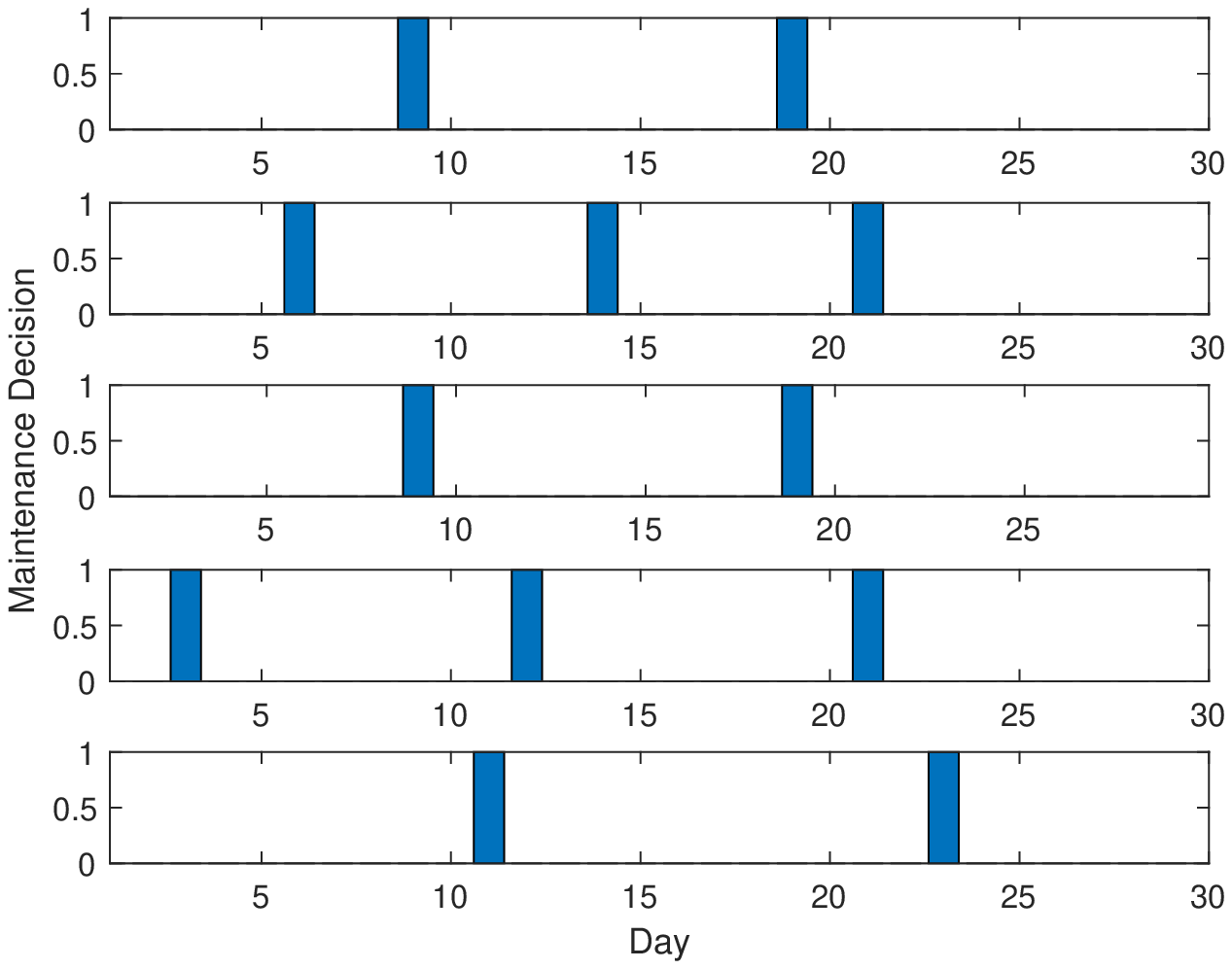}
	\caption{Maintenance scheduling of all supplier's manufacturing units during $30$ days.}
	\label{maintenance}
\end{figure}
The deterioration state of the units is shown in Figure \ref{deterioration}. The figure shows that the units perform maintenance before their failure time 
\begin{figure}
	\centering
	\includegraphics[width=5.5in]{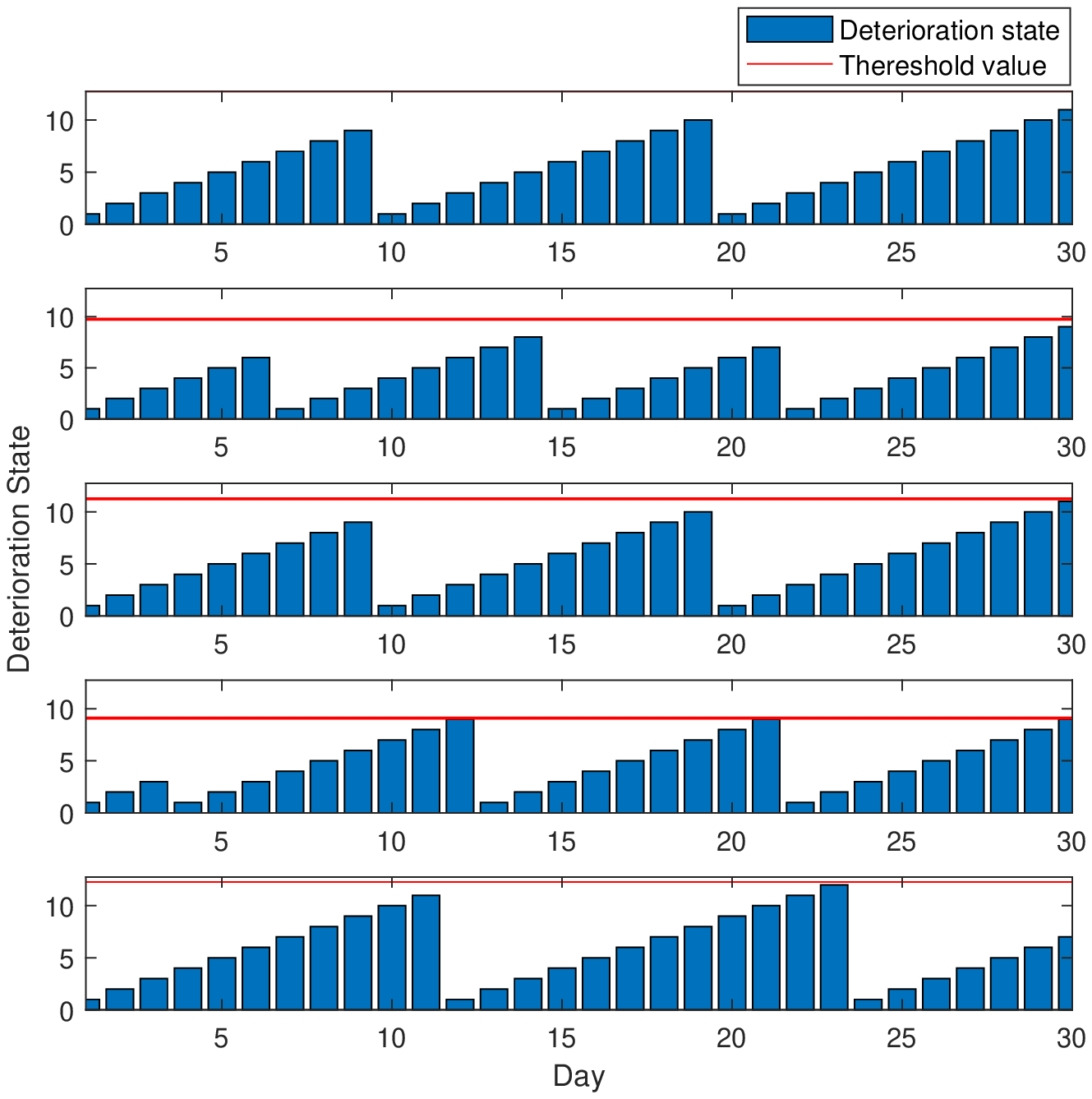}
	\caption{Deterioration state of all supplier's units during $30$ days.}
	\label{deterioration}
\end{figure}
\subsection{The effect of network structure}
In this section, we investigate the effect of the social network structure on the supplier's objective function. In case that the weights of the network are not known to the supplier, the supplier predicts the NE of the network as follows:
\begin{equation}
    q^*(t)=A^{-1}(B-\phi^*(t))
\end{equation}
Then, the price and maintenance scheduling are obtained as follows:

\begin{equation}
\begin{aligned}
    &\max\limits_{{\phi},{x},{s},{y}}\sum\limits_{t\in\mathcal{T}}\phi^\top(t)(A)^{-1}(B-\Phi(t))-\sum\limits_{j\in\mathcal{J}}\sum\limits_{t\in\mathcal{T}}c_{j}x_{j}(t)\\
    &\text{S.b.}\quad \text{C}'_{1},\quad \text{C}'_{2}\quad ,\quad \text{C}'_{3},\quad\text{C}'_{4},\quad \text{C}'_{5},\\
    &\qquad\quad\text{C}''_{6}:\quad\sum\limits_{l\in\mathcal{N}}R_{il}(b_{l}-\phi_{l})\leq{\sum\limits_{j\in\mathcal{J}}(1-x_{j}(t))q_{j,max}},\\
    &\quad \qquad \text{C}''_{7}:\quad (A)^{-1}(B-\Phi(t))\geq{0},\quad \phi(t)\geq{0}.
    \end{aligned}
    \label{eq:main problem3}
\end{equation}
In our case study, the profit that the supplier can obtain by solving \eqref{eq:main problem2} (knowing the network structure) for the entire duration of the period over $30$ days is $1.2893e+05$ $MU$ (monetary unit) and the supplier's objective function without knowing the network structure \eqref{eq:main problem3} is $5.3622e+04$ $MU$. Hence, by knowing the network structure, a supplier can obtain the strategy that results in a $7.5308e+04$ $MU$ higher profit.

\subsection{Comparison of the proposed method to the baseline maintenance decision}
The result of the reward function is compared to the baseline solution where the units perform maintenance at the deterioration threshold $\min\limits_{k=1,\cdots,K}S^{k}_{2,j}$, $j\in\mathcal{J}$. This baseline decision makes sure that the units can perform maintenance before they fail. The maintenance decision and the degradation cost are shown in Figures \ref{maintenance decision_baseline} and \ref{degredation cost_baseline}.

\begin{figure}
	\centering
	\includegraphics[width=5.5in]{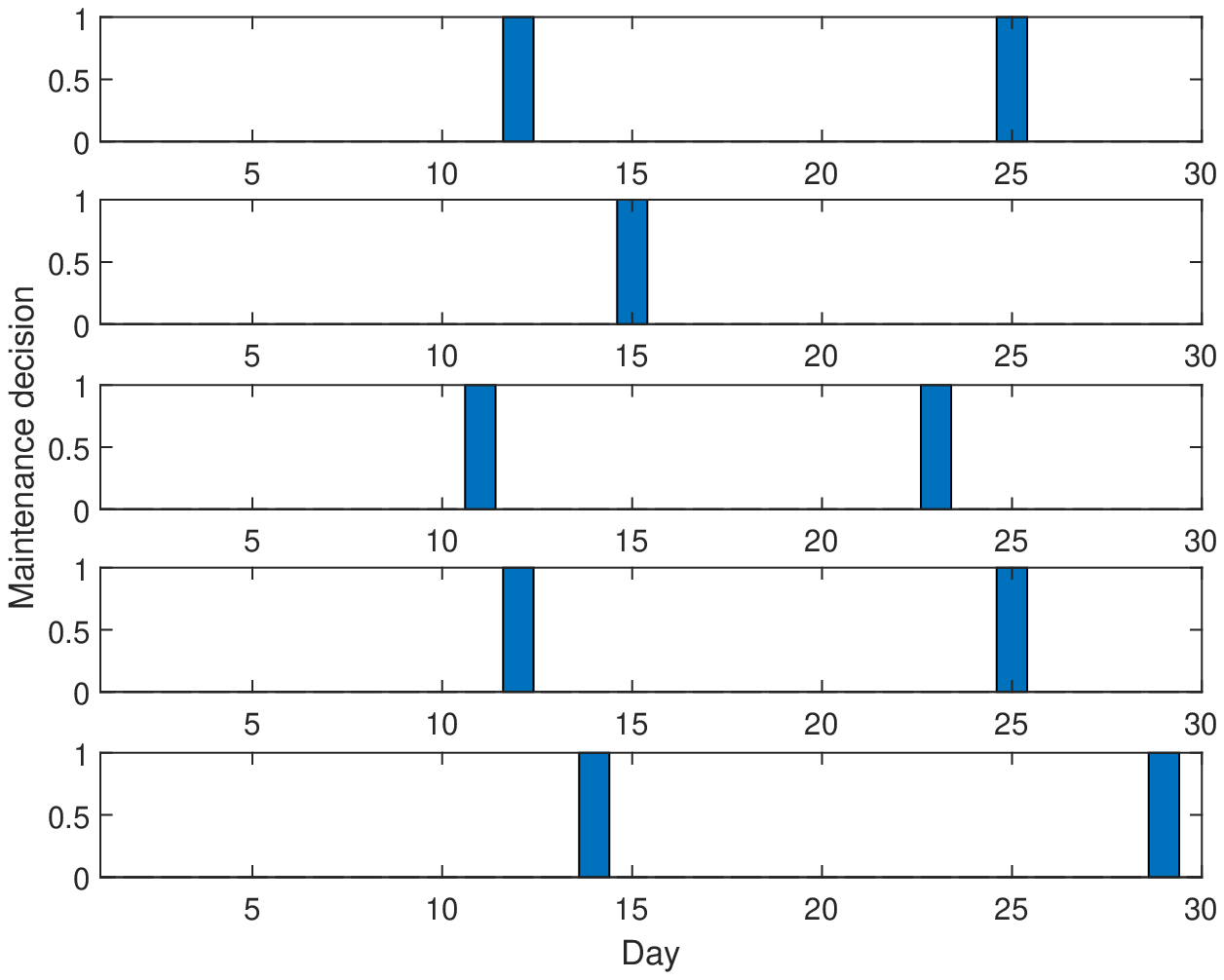}
	\caption{Baseline maintenance scheduling of all supplier's manufacturing units during $30$ days.}
	\label{maintenance decision_baseline}
\end{figure}

\begin{figure}
	\centering
	\includegraphics[width=5.5in]{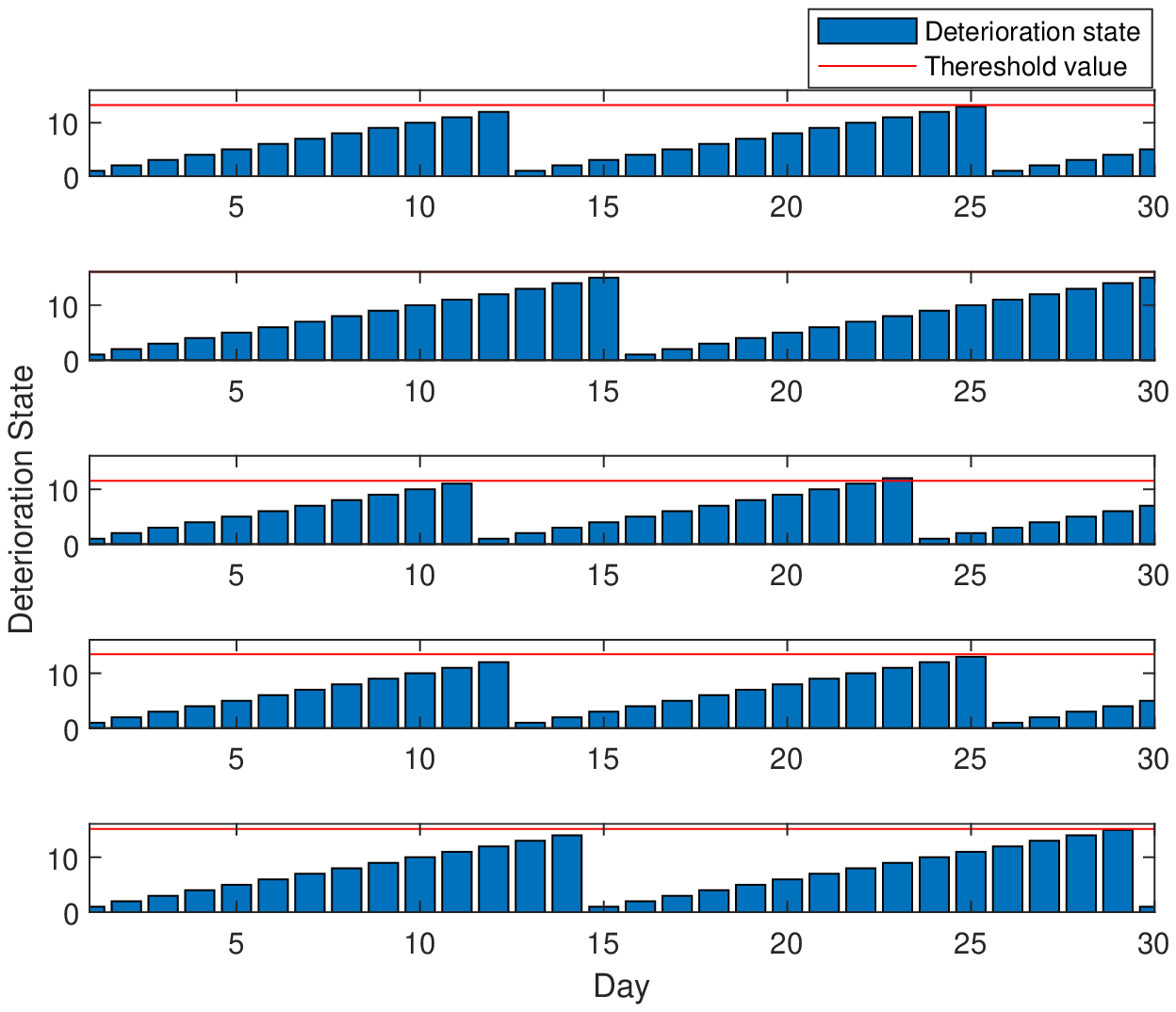}
	\caption{Deterioration state of all supplier's units for baseline maintenance decision during $30$ days .}
	\label{degredation cost_baseline}
\end{figure}
When the units decide to perform maintenance when the threshold is reached, the profit of the supplier is 
$7.4919e+04$ $MU$, which is $5.4011$ $MU$ less than the profit when they obtain their maintenance decision using \eqref{eq:main problem1}. In this case, the maintenance scheduling of units does not depend on the price of the network but solely on the deterioration state. 
\section{Conclusion}
\label{sec:conclusion}
In this paper, we address the problem of maintenance scheduling for manufacturing units of goods with positive externalities by concurrently optimizing the pricing and the maintenance schedule.  The customers are connected through a social network graph. In order to solve the problem, we propose a bi-level optimization approach where at the first level, the supplier obtains its maintenance scheduling and the price at the level of individual customers. At the second level, the customers obtain their strategies based on the price offered by the supplier. In order to solve the bi-level optimization problem, we propose a solution based on a leader-multiple-followers game where the customers are the followers and the supplier acts as a leader that optimizes its decisions based on the predicted strategies of the customers. The proposed approach can be extended and implemented to the case of more complex network with more agents. The numerical results of the case study show that when the network structure is known to the supplier, the obtained profit of the supplier is higher compared to the case that the connections between the customers in the network are not known. Moreover, we demonstrated that obtaining the maintenance decision using the proposed bi-level optimization method leads to more profit for the supplier than base-line solution.

As future research, multiple suppliers can be considered in the proposed framework where the strategies of the suppliers are related to each other and they also play a game. Moreover, one can consider uncertainties in the RUL prediction and the customers' models. This will enable to develop a robust optimization method which can take the uncertainties into account.

\section*{Acknowledgement}
\vspace{-3pt}
The contribution of Olga Fink was funded by the Swiss National Science Foundation (SNSF) Grant no. PP00P2\_176878.
\vspace{-3pt}

\bibliographystyle{elsarticle-num}
\bibliography{bib_items}

\begin{thebibliography}{10}
\expandafter\ifx\csname url\endcsname\relax
  \def\url#1{\texttt{#1}}\fi
\expandafter\ifx\csname urlprefix\endcsname\relax\def\urlprefix{URL }\fi
\expandafter\ifx\csname href\endcsname\relax
  \def\href#1#2{#2} \def\path#1{#1}\fi

\bibitem{jadbabaie2019optimal}
A.~Jadbabaie, A.~Kakhbod, Optimal contracting in networks, Journal of Economic
  Theory 183 (2019) 1094--1153.

\bibitem{candogan2012optimal}
O.~Candogan, K.~Bimpikis, A.~Ozdaglar, Optimal pricing in networks with
  externalities, Operations Research 60~(4) (2012) 883--905.

\bibitem{lu2019quality}
B.~Lu, X.~Zhou, Quality and reliability oriented maintenance for multistage
  manufacturing systems subject to condition monitoring, Journal of
  Manufacturing Systems 52 (2019) 76--85.

\bibitem{Sarkar_2011}
A.~Sarkar, S.~Panja, B.~Sarkar, Survey of maintenance policies for the last 50
  years, International Journal of Software Engineering and Applications 2~(3)
  (2011) 130--148.

\bibitem{erozan2019fuzzy}
{\.I}.~Erozan, A fuzzy decision support system for managing maintenance
  activities of critical components in manufacturing systems, Journal of
  Manufacturing Systems 52 (2019) 110--120.

\bibitem{Omshi_2020}
E.~Omshi, A.~Grall, S.~Shemehsavar, A dynamic auto-adaptive predictive
  maintenance policy for degradation with unknown parameters, European Journal
  of Operational Research 282~(1) (2020) 81--92.

\bibitem{wu2002optimal}
D.~J. Wu, P.~R. Kleindorfer, J.~E. Zhang, Optimal bidding and contracting
  strategies for capital-intensive goods, European Journal of Operational
  Research 137~(3) (2002) 657--676.

\bibitem{bramoulle2014strategic}
Y.~Bramoull{\'e}, R.~Kranton, M.~D'amours, Strategic interaction and networks,
  American Economic Review 104~(3) (2014) 898--930.

\bibitem{rokhforoz2021multi}
P.~Rokhforoz, B.~Gjorgiev, G.~Sansavini, O.~Fink, Multi-agent maintenance
  scheduling based on the coordination between central operator and
  decentralized producers in an electricity market, Reliability Engineering \&
  System Safety 210 (2021) 107495.

\bibitem{volkanovski2008genetic}
A.~Volkanovski, B.~Mavko, T.~Bo{\v{s}}evski, A.~{\v{C}}au{\v{s}}evski,
  M.~{\v{C}}epin, Genetic algorithm optimisation of the maintenance scheduling
  of generating units in a power system, Reliability Engineering \& System
  Safety 93~(6) (2008) 779--789.

\bibitem{Amiri_2006designing}
A.~Amiri, Designing a distribution network in a supply chain system:
  Formulation and efficient solution procedure, European journal of operational
  research 171~(2) (2006) 567--576.

\bibitem{feng2018two}
P.~Feng, Y.~Liu, F.~Wu, C.~Chu, Two heuristics for coordinating production
  planning and transportation planning, International Journal of Production
  Research 56~(21) (2018) 6872--6889.

\bibitem{yang2008maintenance}
Z.~M. Yang, D.~Djurdjanovic, J.~Ni, Maintenance scheduling in manufacturing
  systems based on predicted machine degradation, Journal of intelligent
  manufacturing 19~(1) (2008) 87--98.

\bibitem{celen2020integrated}
M.~Celen, D.~Djurdjanovic, Integrated maintenance and operations decision
  making with imperfect degradation state observations, Journal of
  Manufacturing Systems 55 (2020) 302--316.

\bibitem{lu2017opportunistic}
B.~Lu, X.~Zhou, Opportunistic preventive maintenance scheduling for
  serial-parallel multistage manufacturing systems with multiple streams of
  deterioration, Reliability Engineering \& System Safety 168 (2017) 116--127.

\bibitem{Ghasemi_2007}
A.~Ghasemi, S.~Yacout, M.~Ouali, Optimal condition based maintenance with
  imperfect information and the proportional hazards model, International
  journal of production research 45~(4) (2007) 989--1012.

\bibitem{Ghasemi_2008}
A.~Ghasemi, S.~Yacout, M.~Ouali, Optimal strategies for non-costly and costly
  observations in condition based maintenance, International Journal of Applied
  Mathematics 38~(2) (2008).

\bibitem{kroning2013dynamic}
S.~Kr{\"o}ning, B.~Denkena, Dynamic scheduling of maintenance measures in
  complex production systems, CIRP Journal of Manufacturing Science and
  Technology 6~(4) (2013) 292--300.

\bibitem{lee2014prognostics}
J.~Lee, F.~Wu, W.~Zhao, M.~Ghaffari, L.~L.~D. Siegel, Prognostics and health
  management design for rotary machinery systems—reviews, methodology and
  applications, Mechanical systems and signal processing 42~(1-2) (2014)
  314--334.

\bibitem{liao2021remaining}
G.~Liao, H.~Yin, M.~Chen, Z.~Lin, Remaining useful life prediction for
  multi-phase deteriorating process based on wiener process, Reliability
  Engineering \& System Safety 207 (2021) 107361.

\bibitem{si2012remaining}
X.~Si, W.~Wang, C.~Hu, D.~Zhou, M.~Pecht, Remaining useful life estimation
  based on a nonlinear diffusion degradation process, IEEE Transactions on
  reliability 61~(1) (2012) 50--67.

\bibitem{djeziri2018hybrid}
M.~Djeziri, S.~Benmoussa, R.~Sanchez, Hybrid method for remaining useful life
  prediction in wind turbine systems, Renewable Energy 116 (2018) 173--187.

\bibitem{fortuny1981representation}
J.~Fortuny-Amat, B.~McCarl, A representation and economic interpretation of a
  two-level programming problem, Journal of the operational Research Society
  32~(9) (1981) 783--792.

\bibitem{margellos2014road}
K.~Margellos, P.~Goulart, J.~Lygeros, On the road between robust optimization
  and the scenario approach for chance constrained optimization problems, IEEE
  Transactions on Automatic Control 59~(8) (2014) 2258--2263.

\bibitem{hu2013existence}
M.~Hu, M.~Fukushima, Existence, uniqueness, and computation of robust nash
  equilibria in a class of multi-leader-follower games, SIAM Journal on
  optimization 23~(2) (2013) 894--916.

\bibitem{cruz1978leader}
J.~Cruz, Leader-follower strategies for multilevel systems, IEEE Transactions
  on Automatic Control 23~(2) (1978) 244--255.

\end{thebibliography}

\end{document}